\documentclass[letterpaper, 10 pt, conference]{ieeeconf}  

\IEEEoverridecommandlockouts                              
\usepackage{color}
\usepackage{amsmath} 
\usepackage{amssymb}  
\usepackage{tikz}
\usepackage{cite}
\usepackage{pgfplots}
\usepackage{hyperref}
\pgfplotsset{
	compat=newest,
	/pgfplots/legend image code/.code={%
		\draw[mark repeat=2,mark phase=2,#1] 
		plot coordinates {
			(0cm,0cm) 
			(0.1cm,0cm)
			(0.2cm,0cm)
			(0.3cm,0cm)
			(0.4cm,0cm)%
		};
	},
}
\newlength\hohe
\newlength\breite

\newcommand{\regret}{\mathcal R_T}
\newcommand{\norm}[1]{\left\lVert#1\right\rVert}
\newcommand{\MAS}{\mathcal{O}_\infty}
\newcommand{\RPI}{\mathcal{P}^\lambda}
\newcommand{\refeq}[1]{\overset{#1}{=}}
\newcommand{\refleq}[1]{\overset{#1}{\leq}}

\newcommand{\refin}[1]{\overset{#1}{\in}}
\newcommand{\refsubseteq}[1]{\overset{#1}{\subseteq}}
\newcommand{\refrarrow}[1]{\overset{#1}{\Rightarrow}}
\newcommand{\reflrarrow}[1]{\overset{#1}{\Leftrightarrow}}
\newcommand{\overbar}[1]{\mkern 1.5mu\overline{\mkern-1.5mu#1\mkern-1.5mu}\mkern 1.5mu}

\newtheorem{defn}{Definition}
\newtheorem{thm}{Theorem}
\newtheorem{ass}{Assumption}
\newtheorem{lem}{Lemma}

\title{\LARGE \bf
Robust Control of Constrained Linear Systems using Online Convex Optimization and a Reference Governor
}

\author{Marko Nonhoff, Mohammad T. Al Torshan and Matthias A. Müller
\thanks{This work was supported by the Deutsche Forschungsgemeinschaft (DFG, German Research Foundation) - 505182457.}
\thanks{The authors are with the Leibniz University Hannover, Institute of Automatic Control, Germany (email: \{nonhoff,mueller\}@irt.uni-hannover.de, al.torshan@stud.uni-hannover.de).}}%

\newcommand\copyrighttext{%
	\footnotesize \textcopyright2024 IEEE. Personal use of this material is permitted.  Permission from IEEE must be obtained for all other uses, in any current or future media, including reprinting/republishing this material for advertising or promotional purposes, creating new collective works, for resale or redistribution to servers or lists, or reuse of any copyrighted component of this work in other works.}
\newcommand\copyrightnotice{%
	\begin{tikzpicture}[remember picture,overlay]
		\node[anchor=south,yshift=5pt,xshift=0pt,fill=white] at (current page.south) {\fbox{\parbox{\dimexpr.9\textwidth-\fboxsep-\fboxrule\relax}{\copyrighttext}}};
	\end{tikzpicture}%
}

\begin{document}

\maketitle
\thispagestyle{empty}
\pagestyle{empty}

\begin{abstract}
This article develops a control method for linear time-invariant systems subject to time-varying and a priori unknown cost functions, that satisfies state and input constraints, and is robust to exogenous disturbances. To this end, we combine the online convex optimization framework with a reference governor and a constraint tightening approach. The proposed framework guarantees recursive feasibility and robust constraint satisfaction. Its closed-loop performance is studied in terms of its dynamic regret, which is bounded linearly by the variation of the cost functions and the magnitude of the disturbances. The proposed method is illustrated by a numerical case study of a tracking control problem.
\end{abstract}

\section{INTRODUCTION} \label{sec:intro}
\copyrightnotice

In recent years, controllers for dynamical systems based on the online convex optimization (OCO) framework have gained significant interest in the literature. Originally, the OCO framework was developed to handle optimization problems which feature time-varying and a priori unknown cost functions (compare, e.g., \cite{Shalev12,Hazan2016} and the references therein for an overview). More specifically, at each time instance~$t$ an OCO algorithm computes an action $u_t$ based solely on previous actions and cost functions, but has no access to current or future cost functions. Since such time-varying and a priori unknown cost functions also arise in a variety of control applications, for example due to renewable energy generation, time-varying loads or a priori unknown energy prices \cite{Tang17,Menta18} or in tracking control, when the desired trajectory to be tracked is computed online itself \cite{Zheng20,Tsiamis24}, numerous controllers based on the OCO framework have been proposed in the literature, see, e.g., \cite{Nonhoff24,Agarwal19,Shi2020,Li21,Hazan2022,Lin2023,Karapetyan2023} and the references therein. Most commonly, these algorithms aim to track the optimal steady states of the controlled system, which are a priori unknown and time-varying themselves, because they depend on the a priori unknown time-varying cost functions.

In addition to the ability to operate in dynamic environments with time-varying and a priori unknown cost functions, safety in the form of constraint satisfaction guarantees is a crucial requirement in many practical applications. Such constraints are ubiquitous in practice, where, e.g., actuator limitations and physical restrictions of the system under control have to be taken into account. Similarly, robustness to exogenous disturbances is of paramount importance in practice because of their potentially harmful effect on closed-loop performance and constraint satisfaction guarantees. However, OCO-based algorithms that are able to handle both, constraints and exogenous disturbances, are much scarcer in the literature \cite{Nonhoff24,Li21}. Furthermore, in \cite{Li21}, the setting is limited to disturbance rejection, i.e., only stabilization of the origin is considered in contrast to tracking of time-varying steady states, whereas in \cite{Nonhoff24} at least a linear program has to be solved at each time step, increasing the computational burden of the proposed algorithm.

Another closely related line of research is so-called feedback optimization, where the controller aims to drive a dynamical system to the solution of a (possibly constrained and time-varying) optimization problem, see, e.g., \cite{Simonetto2020,Hauswirth21,Lawrence2021,Chen23} and the references therein. Whereas exogenous disturbances are commonly considered in this framework, constraints are typically only enforced for the optimal steady state. In particular, satisfaction of constraints on the state of the system can generally not be guaranteed at all times.

In this work, we build on and extend the results from \cite{Nonhoff22b}. We propose a framework for robust control of dynamical systems that is subject to time-varying and a priori unknown cost functions as well as pointwise in time state and input constraints. To do so, we make use of a conceptually simple approach: First, we design an OCO algorithm that computes a reference signal for the controlled system, which aims to track the optimal steady state, thereby handling the time-varying and a priori unknown cost functions. Second, we employ a reference governor (RG) that modifies the OCO algorithm's reference in order to enforce constraint satisfaction. Reference governors are add-on schemes that modify the reference to a well-designed closed-loop system in order to avoid constraint violation, see, e.g., \cite{Garone17} for a recent survey on the topic. Compared to our previous work \cite{Nonhoff22b}, we significantly improve the results presented therein by considering exogenous disturbances acting on the controlled system and robustifying the proposed algorithm. Furthermore, we streamline the proofs and theoretical analysis and apply our framework in a numerical simulation to a tracking control problem for a mobile robot.

This paper is organized as follows. In Section \ref{sec:setting}, we discuss the setting considered in this work. The proposed algorithm is presented in Section \ref{sec:algorithm}, where we first design the reference governor and discuss our constraint tightening approach. Then, the overall framework is proposed. In Section \ref{sec:analysis}, we establish theoretical guarantees for the proposed algorithm on recursive feasibility, constraint satisfaction, a minimum rate of convergence, and a bound on its dynamic regret. Implementational aspects are illustrated in Section \ref{sec:simulation} on a numerical simulation of a tracking control problem for a mobile robot. Finally, Section \ref{sec:conclusion} concludes the paper.

\textit{Notation:} We denote the set of natural numbers (including $0$) by $\mathbb{N}$. For a vector $x\in\mathbb{R}$ and a matrix $A\in\mathbb{R}^{n\times m}$, $\norm{x}$ is the Euclidean norm and $\norm{A}$ is the corresponding induced matrix $2$-norm. The identity matrix of size $n\times n$ is denoted by $I_n$ and $0_{m\times n}\in\mathbb{R}^{m\times n}$ denotes the matrix of all zeros. For two sets $\mathcal S, \mathcal T\subseteq\mathbb{R}^n$, the Minkowski set sum is defined by $\mathcal S\oplus \mathcal T := \{s+t: s\in\mathcal{S},t\in\mathcal{T}\}$, the Pontryagin set difference is given by $\mathcal S\ominus\mathcal T := \{s\in\mathbb{R}^n: s+t\in\mathcal S,~\forall t\in\mathcal T\}$, and the interior of $\mathcal S$ is $\text{int }\mathcal S$. The projection operator of a point $x\in\mathbb{R}^n$ onto a set $\mathcal S\subseteq\mathbb R^{n}$ is defined by $\Pi_{\mathcal{S}}(x) := \arg\min_{s\in\mathcal S} \norm{x-s}^2$.

\section{SETTING} \label{sec:setting}

We consider a linear time-invariant system of the form

\begin{equation}
	x_{t+1} = Ax_t + Bu_t + B_w w_t \label{eq:original_dynamics}
\end{equation}
with some initial state $x_0\in\mathbb{R}^n$, where $x_t\in\mathbb{R}^n$ is the system state, $u_t\in\mathbb{R}^m$ is the system input and $w_t\in\mathbb R^o$ is a disturbance, all at time $t\in\mathbb{N}$. We assume that measurements of the full state $x_t$ are available. However, if only noisy state measurements are available due to, e.g., measurement noise or use of an observer, we show in Section~\ref{sec:simulation} that our results are still applicable using a suitable reformulation of the noisy closed-loop dynamics, compare also~\cite{Nonhoff24}. Furthermore, system \eqref{eq:original_dynamics} is subject to the constraints
\begin{equation}
	y_t \in \mathcal{Y}, \label{eq:constraints}
\end{equation}
where $y_t = C_ox_t + Du_t + D_w w_t$ is the constraint output, i.e., the goal is to keep the output $y_t$ in a constraint set $\mathcal Y\subseteq\mathbb R^p$ at all times $t\in\mathbb{N}$. Note that pure input or state constraints as well as mixed constraints can be considered by a suitable choice of the matrices $C_o$, $D$, and $D_w$. 

\begin{ass} \label{ass:stabilizable}
	The pair $(A,B)$ is stabilizable.
\end{ass}
\begin{ass} \label{ass:constraints}
	The constraint set $\mathcal Y$ is compact, convex, and $0\in\text{int }\mathcal Y$.
\end{ass}
\begin{ass} \label{ass:disturbance}
	There exists a compact, convex set $\mathcal{W}\subseteq\mathbb R^o$ such that $0\in\text{int }\mathcal{W}$ and $w_t\in\mathcal{W}$ for all times $t\in\mathbb{N}$.
\end{ass}

By Assumption \ref{ass:stabilizable}, we can design a state feedback matrix $K\in\mathbb{R}^{m\times n}$ such that $A_K:=A+BK$ is Schur stable. Thus, we can define a virtual reference input $v_t\in\mathbb{R}^m$ and set $u_t = v_t+Kx_t$ for all times $t\in\mathbb{N}$ to obtain the closed-loop dynamics and constraint output
\begin{subequations} \label{eq:system}
\begin{align}
	x_{t+1} &= A_Kx_t + Bv_t + B_ww_t \label{eq:dynamics} \\
	y_t &= Cx_t + Dv_t + D_ww_t, \label{eq:output}
\end{align}
\end{subequations}
where $C := C_o+DK$, subject to the constraints \eqref{eq:constraints}.

Additionally, we define the map from a virtual reference input $v_t$ to the corresponding steady state of the undisturbed stabilized system by $S_K := (I_n-A_K)^{-1}B$, i.e., $S_Kv_t = A_KS_Kv_t+Bv_t$ holds for any virtual reference $v_t\in\mathbb{R}^m$. The set of admissible steady-state references is given by $\mathcal S_v := \{v\in\mathbb{R}^m: (CS_K+D)v \in \mathcal Y\ominus D_w\mathcal W\}$. The goal is to optimize performance measured by time-varying, a priori unknown cost functions $L_t:\mathbb{R}^m\times\mathbb{R}^n\mapsto\mathbb{R}$. More specifically we aim to solve the optimization problem
\begin{equation}
	\min_{v_0,\dots,v_T\in\mathcal{S}_v} \sum_{t=0}^T L_t(v_t+Kx_t,x_t)\quad \text{s.t. \eqref{eq:constraints},\eqref{eq:system}}, \label{eq:OCP}
\end{equation}
where at each time $t$, only the previous cost functions $L_0,\dots,L_{t-1}$ are known. As discussed above, this setting frequently arises in control applications due to, e.g., time-varying and a priori unknown reference trajectories in tracking control problems or time-varying and a priori unknown exogenous inputs that influence the cost functions. However, we note that obtaining the optimal solution of~\eqref{eq:OCP} is generally intractable due to the time-varying and a priori unknown cost functions $L_t$. Instead, we develop an algorithm in Section~\ref{sec:algorithm} that aims to approximate the optimal performance by tracking the a priori unknown and time-varying optimal steady states using the OCO framework, while ensuring satisfaction of the constraints~\eqref{eq:constraints}. As common in OCO-based control \cite{Nonhoff24,Agarwal19,Lin2023,Li2019}, we assume some regularity of the time-varying cost functions $L_t$.

\begin{ass} \label{ass:cost_fcn}
	The cost functions $L_t(v,x)$ are $l_L$-Lipschitz continuous on the set $\mathcal Z:=\{(v,x)\in\mathbb{R}^m\times\mathbb{R}^n: Cx+Dv\in\mathcal Y\}$, i.e., there exists $l_L$ such that $\norm{L_t(v,x)-L_t(\tilde{v},\tilde{x})} \leq l_L \norm{(v,x)-(\tilde{v},\tilde{x})}$ holds for all $t\in\mathbb{N}$ and $(v,x),(\tilde{v},\tilde{x})\in\mathcal{Z}$. Moreover, the steady-state cost functions $L^s_t(v) := L_t(v+KS_Kv,S_Kv)$ are $\alpha_v$-strongly convex\footnote{Compare, e.g., \cite[Definition 2.1.3]{Nesterov18}.} and its gradients are $l_v$-Lipschitz continuous on $v\in\mathcal{S}_v$ for all $t\in\mathbb{N}$.
\end{ass}

Note that, if $\mathcal{Z}$ and $\mathcal{S}_v$ are compact, then local Lipschitz continuity of $L_t(v,x)$ and $\nabla L^s_t(v)$ implies Lipschitz continuity on $\mathcal{Z}$ and $\mathcal{S}_v$ as required in Assumption~\ref{ass:cost_fcn}.

\section{ROBUST OCO-RG-BASED CONTROL} \label{sec:algorithm}

In this work, we design an algorithm that separates the problem into minimizing the cost functions in \eqref{eq:OCP} and ensuring constraint satisfaction \eqref{eq:constraints} at all times $t\in\mathbb N$. For that, we apply the online gradient descent (OGD) algorithm \cite{Zinkevich03} to generate a sequence of references $r_t\in\mathcal{S}_v$ based on the time-varying cost functions $L_t$, and subsequently use a reference governor that computes a virtual reference $v_t\in\mathcal{S}_v$ based on the current reference $r_t$ and measured system state $x_t$. Note that, even though $r_t\in\mathcal{S}_v$ implies that the reference computed by the OGD algorithm is feasible at steady state, directly applying it to system \eqref{eq:system} may lead to constraint violation during the transient phase. In this case, the reference governor modifies the reference such that constraint satisfaction is guaranteed. This conceptually simple approach is illustrated in Figure~\ref{fig:BSB}.

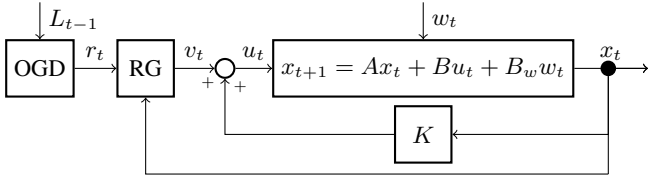
\begin{figure}	
	\small
	\begin{tikzpicture}[node distance = 40,scale = 1]
		\draw[draw=none] (-.5,-2.4) rectangle (8.1,0.3);
		\tikzstyle{block} = [draw, minimum size=.75cm, thick, align = center]
		\tikzstyle{add} = [draw, shape = circle, inner sep = 0pt, minimum size=.25cm, thick]
		\tikzstyle{circle} = [draw, shape=circle, inner sep = 0pt, minimum size = .15cm, fill=black, ultra thick]		
		
		\coordinate (start) at (0,0);	
		\node[block] (OCO) [below of = start,yshift=15] {OGD};		
		\node[block] (RG) [right of = OCO] {RG};
		\node[add] (add) [right of = RG,xshift=-10] {};
		\node[block] (system) [right of = add,xshift = 35] {$x_{t+1}=Ax_t+Bu_t+B_ww_t$};
		\node[circle] (circ) [right of = system,xshift=30] {};
		\node[block] (contr) [below of = system,yshift = 15] {$K$};
		\coordinate[right of = circ, xshift = -25] (end);
		\coordinate[below of = circ, yshift = -0] (aux);
		\coordinate[above of = system,yshift=-15] (dist);
		
		\draw[->] (start) -- (OCO) node[right,midway] {$L_{t-1}$};
		\draw[->] (OCO) -- (RG) node[above,midway] {$r_t$};
		\draw[->] (RG) -- (add) node[above,midway] {$v_t$} node[below,midway,xshift=5] {\tiny{$+$}};
		\draw[->] (add) -- (system) node[above,midway] {$u_t$};
		\draw[->] (system) -- (end) node[above,midway] {$x_t$};
		\draw[->] (circ) |- (contr);
		\draw[->] (contr) -| (add) node[right,yshift=-7] {\tiny{$+$}};
		\draw (circ) -- (aux);	
		\draw[->] (aux) -| (RG);
		\draw[->] (circ)--(end);
		\draw[->] (dist)--(system) node[right,midway] {$w_t$};
	\end{tikzpicture}	
	\caption{Block diagram of the robust OCO-RG scheme.}
	\label{fig:BSB}
\end{figure}

\subsection{Design of the reference governor} \label{subsec:RG}

In this section, we develop a reference governor that ensures robust satisfaction of the constraints \eqref{eq:constraints}. The proposed approach robustifies the results in \cite{Nonhoff22b} by taking the disturbances $w_t$ into account. As in \cite{Nonhoff22b}, the proposed reference governor is based on the results presented in \cite{Kalabic14}. 

As described above, reference governors compute a feasible virtual reference signal $v_t$ at each time $t$ that ensures constraint satisfaction. More specifically, the virtual reference $v_t$ is chosen such that, if it were applied constantly to the system \eqref{eq:system}, then the constraints \eqref{eq:constraints} would be satisfied for all future time steps. Such a virtual reference is computed using the maximal output admissible set (MAS) \cite{Gilbert91}.
\begin{defn} \label{def:MAS}
	The maximal output admissible set $\MAS$ of a system $\chi_{t+1} = A\chi_t+B_w \omega_t$, $\omega_t\in\mathcal W$, subject to constraints $y_t=C\chi_t+D_w\omega_t\in\mathcal Y$ is defined as
	\begin{align*}
		\MAS := \{x\in\mathbb{R}^n: \chi_0 = x,~\chi_{t+1} = A\chi_t+B_w\omega_t,& \\
		 C\chi_t+D_w\omega_t\in\mathcal Y\quad\forall \omega_t\in\mathcal{W},t\in\mathbb{N}\}&
	\end{align*}
\end{defn}
The MAS is the set of all initial conditions for which constraint satisfaction is guaranteed for any sequence of disturbances $w_t\in\mathcal W$ and for all times. Furthermore, in order to ensure a sufficient rate of convergence, we employ a $\lambda$-contractive set in the proposed reference governor scheme.
\begin{defn} \label{def:contr_set}
	For a system $\chi_{t+1} = A\chi_t+B_w\omega_t$, $\omega_t\in\mathcal W$, a set $\mathcal{X}\in\mathbb{R}^n$ is $\lambda$-contractive with some $\lambda\in(0,1)$ if $\mathcal{X}$ is convex, closed, $0\in\text{int }\mathcal{X}$, and $A\mathcal X\oplus B_w\mathcal W \subseteq \lambda \mathcal X$.
\end{defn}

Finally, recall that $S_K = (I_n-A_K)^{-1}B$ maps from a virtual reference $v_t$ to the corresponding steady state of the undisturbed stabilized system. We define the error between the state $x_t$ and the steady state corresponding to the reference $v_t$ by
\[
	e_t := x_t - S_K v_t.
\]
Next, consider a constant reference $v_t\equiv\nu$. Then, the system dynamics \eqref{eq:system} for an extended state written in error coordinates are given by
\begin{subequations} \label{eq:error_system}
\begin{align}
	v_{t+1} &= v_t \\
	e_{t+1} &= A_K e_t + B_w w_t \label{eq:error_dynamics} \\
	y_t &= Ce_t + (CS_K+D)v_t + D_w w_t,
\end{align}
\end{subequations}
with $v_0 = \nu$ and $e_0 = x_0 - S_k\nu$. Similar to \cite{Nonhoff22b,Kalabic14}, we denote by $\MAS^\lambda$ the MAS of system \eqref{eq:error_system} with \eqref{eq:error_dynamics} replaced by
\begin{equation}
	\epsilon_{t+1} = \frac{1}{\lambda} \left(A_K \epsilon_t + B_w w_t \right) \label{eq:contractive_error_dynamics}
\end{equation}
for some $\rho(A_K)<\lambda<1$ and with constraints $y_t\in\mathcal{Y}$. We make use of this MAS $\MAS^\lambda$ in our OCO-RG scheme in Section \ref{subsec:oco-rg-scheme}. Note that, since $\mathcal Y$ is compact, convex, and $0\in\text{int }\mathcal Y$ by Assumption \ref{ass:constraints}, the MAS $\MAS^\lambda$ is closed, convex, and $0\in\text{int }\MAS^\lambda$ \cite{Gilbert91}.

\subsection{Constraint tightening}
In this section, we develop a constraint tightening that ensures robustness of the proposed approach with respect to the disturbances $w_t$ acting on system \eqref{eq:system}. Similar to robust model predictive control approaches in the literature \cite{Rawlings2017}, we rely on constraint tightening based on a robust positively invariant (RPI) set in order to ensure robust constraint satisfaction for the proposed OCO-RG scheme.
\begin{defn} \label{def:RPI}
	A set $\mathcal{P}$ is called robust positively invariant for a system $\chi_{t+1} = A\chi+B_w\omega_t$, $\omega_t\in\mathcal W$, if $A\mathcal{P}\oplus B_w\mathcal W \subseteq \mathcal{P}$ holds.
\end{defn}
Let $\RPI$ be an RPI set for system \eqref{eq:contractive_error_dynamics}. It can be shown that the minimal RPI set $\RPI_\infty$ for system \eqref{eq:contractive_error_dynamics} is given by
\begin{equation}
	\RPI_\infty := \bigoplus_{i=0}^\infty \left( \left(\frac{1}{\lambda} A_K\right)^i \frac{1}{\lambda} B_w\mathcal{W} \right) \subseteq \RPI, \label{eq:mRPI}
\end{equation}
compare \cite{Gilbert91}. Let $\overbar{\mathcal{Y}}$ be a convex, compact subset of $\mathcal{Y}$ such that $0\in\overbar{\mathcal{Y}}\subseteq\text{int }\mathcal{Y}$. Then, we define the tightened set of admissible steady-state references by
\[
	\overbar{\mathcal{S}}_v := \{v\in\mathbb{R}^m:(CS_K+D)v\in\overbar{\mathcal{Y}}\ominus D_w\mathcal{W}\ominus C\RPI\}.
\]
Note that $\overbar{\mathcal{S}}_v$ is convex, because $\overbar{\mathcal{Y}}$ is convex. In the following, we use the tightened set $\overbar{\mathcal{S}}_v$ within the OGD algorithm.


\subsection{Robust OCO-RG scheme} \label{subsec:oco-rg-scheme}

\begin{figure} 
	\vspace{5pt}
	{
		\fbox{\parbox{.94\linewidth}{
				{\underline{Algorithm 1: Robust OCO-RG scheme}}
				
				\vspace{7pt}
				
				\noindent At $t=0$: Set $\alpha_0 = 1$, $v_0=r_0$ and apply $u_0 = v_0 + Kx_0$.
				
				\noindent At each time $t\in\mathbb N_{\geq 1}$, given the current system state~$x_t$:
			
				\noindent [S1] Online gradient descent
				\vspace{-1ex}
				\begin{equation}
					r_t = \Pi_{\overbar{S}_v} \left( r_{t-1} - \gamma\nabla L^s_{t-1}(r_{t-1}) \right), \label{eq:OCO_OGD}
				\end{equation}
				
				\noindent [S2] Reference governor
				\begin{subequations} \label{eq:RG}
					\begin{align}
						\begin{split}
						&\alpha_t = \max_{\alpha \in [0,1]} \alpha \quad \text{s.t. } (v,x_t-S_Kv) \in \MAS^\lambda \label{eq:RG_opt} \\
						&\phantom{\alpha_t = \max_{\alpha \in [0,1]} \alpha \quad} v = v_{t-1} + \alpha (r_t - v_{t-1}) 
						\end{split} \\
						&v_t = v_{t-1} + \alpha_t (r_t - v_{t-1}), \label{eq:RG_update}
					\end{align}
				\end{subequations}
				
				\noindent [S3] Control input
				\vspace{-1ex}
				\begin{equation}
					u_t = v_t + Kx_t. \label{eq:algo_input}
				\end{equation}
				\vspace{-15pt}
		}}
	}
\end{figure}

In this section, we introduce the proposed combination of OGD and the reference governor. The robust OCO-RG scheme is given in Algorithm~1. Compared to \cite{Nonhoff22b}, the main differences are in the definitions of the MAS $\MAS^\lambda$ and the set $\overbar{\mathcal{S}}_v$, which were modified in order to take disturbances into account and ensure robust constraint satisfaction.

As described above, the cost functions $L_t$ are time-varying and a priori unknown, i.e., at each time $t\in\mathbb{N}$, we only have access to the previous cost functions $L_0,\dots,L_{t-1}$. Therefore, it is generally impossible to compute the minimizing input of \eqref{eq:OCP} online. Instead, similar to, e.g., \cite{Nonhoff22b,Nonhoff24}, and the feedback optimization framework (compare Section~\ref{sec:intro}), Algorithm 1 aims to track the a priori unknown and time-varying optimal steady states $(\eta_t,\theta_t)$ given by
\begin{equation*}
	\eta_t := \arg\min_{r\in\overbar{\mathcal{S}}_v} L^s_t(r), \qquad
	\theta_t := S_K\eta_t.
\end{equation*}
Moreover, at time $t=0$, no cost functions are known to Algorithm~1 and it simply applies an initial reference $r_0$. In order to ensure constraint satisfaction at all times, we assume that the initial reference $r_0$ is feasible.

\begin{ass} \label{ass:init}
	The initial reference $r_0\in\overbar{\mathcal{S}}_v$ satisfies $(r_0,x_0-S_Kr_0)\in\MAS^\lambda$.
\end{ass}

Thus, if Assumption~\ref{ass:init} is satisfied, then $u_0=v_0+Kx_0$ with $v_0=r_0$ is a feasible input at time $t=0$. At each of the following time steps $t\in\mathbb{N}_{\geq1}$, the algorithm first measures the current system state $x_t$, and then
\begin{enumerate}
	\item[{[S1]}] applies one OGD step with a suitable step size $\gamma\in(0,\frac{2}{\alpha_v+l_v}]$ to the time-varying optimization problem $\{\min_r L^s_{t-1}(r)$ s.t. $r\in\overbar{\mathcal{S}}_v\}$ in order to compute an estimate of the optimal reference $\eta_{t-1}$. Note that we apply the tightened set $\overbar{\mathcal{S}}_v$ in order to obtain references that robustly satisfy the constraints at steady state.
	\item[{[S2]}] employs a reference governor to guarantee constraint satisfaction during the transient phases. The constraint \eqref{eq:RG_opt} ensures that the error $e_t=x_t-S_Kv_t$ lies in the MAS $\MAS^\lambda$ derived in Section~\ref{subsec:RG}, which implies robust constraint satisfaction as shown in Lemma~\ref{lem:rec_feas} below. The scalar optimization problem \eqref{eq:RG_opt} can be solved efficiently by, e.g., bisection.
	\item[{[S3]}] applies the control input $u_t = v_t+Kx_t$ to system \eqref{eq:original_dynamics}.
\end{enumerate}
Then, the cost function $L_t$ is received and [S1] to [S3] are repeated at the next time step $t+1$.

\section{THEORETICAL ANALYSIS} \label{sec:analysis}
In this section, we derive theoretical guarantees for Algorithm~1. Namely, we show that the reference governor \eqref{eq:RG} is recursively feasible and that the dynamic regret of the closed loop emerging from application of Algorithm~1 is bounded.
\begin{defn}
	Algorithm~1 is called recursively feasible, if existence of a solution to \eqref{eq:RG_opt} at $t=0$ implies that there exists a feasible solution to \eqref{eq:RG_opt} for all $t\in\mathbb{N}$.
\end{defn}
Recursive feasibility is crucial to guarantee that Algorithm~1 is well-defined at all times.
\begin{lem} \label{lem:rec_feas}
	Suppose Assumptions~\ref{ass:stabilizable}--\ref{ass:init} are satisfied. Then, Algorithm~1 is recursively feasible. Moreover, the constraints \eqref{eq:constraints} are satisfied for all $t\in\mathbb{N}$.
\end{lem}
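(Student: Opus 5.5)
The proof is an induction over $t$. The invariant I would carry is $(v_t, e_t)\in\MAS^\lambda$ with $e_t = x_t - S_K v_t$. Two facts about $\MAS^\lambda$ do all the work.

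\emph{(a) Membership implies constraint satisfaction now.} If $(\nu,\epsilon)\in\MAS^\lambda$, then the output condition at time $0$ in Definition~\ref{def:MAS} gives $C\epsilon + (CS_K+D)\nu + D_w w\in\mathcal{Y}$ for all $w\in\mathcal{W}$. Applied to $(v_t,e_t)$ this yields $y_t = Cx_t + Dv_t + D_ww_t = Ce_t + (CS_K+D)v_t + D_w w_t \in \mathcal Y$, which is exactly \eqref{eq:constraints}.

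\emph{(b) Robust invariance under the true error dynamics.} If $(\nu,\epsilon)\in\MAS^\lambda$, then $(\nu, A_K\epsilon + B_w w)\in\MAS^\lambda$ for every $w\in\mathcal{W}$.

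The base case is Assumption~\ref{ass:init}: $v_0=r_0$, so $(v_0,e_0)\in\MAS^\lambda$ and (a) gives $y_0\in\mathcal Y$.

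For the induction step, suppose $(v_{t-1},e_{t-1})\in\MAS^\lambda$ for some $t\ge1$. Choosing $\alpha=0$ in \eqref{eq:RG_opt} gives $v=v_{t-1}$. With $u_{t-1}=v_{t-1}+Kx_{t-1}$ and $S_Kv = A_KS_Kv + Bv$, the candidate error is $x_t - S_Kv_{t-1} = A_Ke_{t-1} + B_ww_{t-1}$. By (b) this candidate lies in $\MAS^\lambda$, so $\alpha=0$ is feasible. The feasible set of $\alpha$ is then a nonempty subset of $[0,1]$. It is closed because $\MAS^\lambda$ is closed and the constraint is affine in $\alpha$, and it is an interval by convexity of $\MAS^\lambda$. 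Hence the maximum $\alpha_t$ exists, and the resulting $v_t$ satisfies $(v_t,e_t)\in\MAS^\lambda$. Applying (a) gives $y_t\in\mathcal Y$, which closes the induction and proves both claims of Lemma~\ref{lem:rec_feas}.

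The main obstacle is (b), because $\MAS^\lambda$ is defined for the scaled dynamics \eqref{eq:contractive_error_dynamics} and not for the true ones. I would first show that $(\nu,\epsilon)\in\MAS^\lambda$ implies $(\nu,\epsilon')\in\MAS^\lambda$ with $\epsilon' = \frac1\lambda(A_K\epsilon+B_ww)$: every scaled trajectory from $\epsilon'$ is the tail of a scaled trajectory from $\epsilon$ with first disturbance $w$, so it satisfies the constraints. It then remains to pass from $\epsilon'$ to $\lambda\epsilon' = A_K\epsilon + B_ww$. I would try to show that $\{\epsilon : (\nu,\epsilon)\in\MAS^\lambda\}$ is convex and contains the nominal point $\epsilon=0$ whenever it is nonempty; then $\lambda\epsilon' = \lambda\epsilon' + (1-\lambda)\cdot 0$ is a convex combination of two members. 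The check that $0$ belongs to the slice is where the difficulty lies, since the disturbance terms do not scale with $\lambda$.

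My plan for that check is to compare trajectories. A scaled trajectory started at $0$ is $\sum_j \lambda^{-(j+1)}A_K^{\,k-1-j}B_ww_j$, and I would try to dominate it by trajectories started at $\epsilon$ using the symmetric pair of disturbance sequences $\pm$, then average those two trajectories with convexity of $\mathcal{Y}$. If this fails, the fallback is to use the RPI set $\RPI$ and the tightening, i.e.\ $\nu\in\overbar{\mathcal{S}}_v$, to argue directly that the constraints hold along the true error trajectory. In that case the statement would lean on $r_t\in\overbar{\mathcal S}_v$ (guaranteed by the projection \eqref{eq:OCO_OGD}) together with convexity.
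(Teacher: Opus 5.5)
Your skeleton matches the paper's. It uses induction on $(v_t,e_t)\in\MAS^\lambda$, the candidate $\alpha=0$, and the tail argument giving $A_K\epsilon+B_ww\in\lambda\,\mathcal{E}^\lambda_\infty(\nu)$ for the slice $\mathcal{E}^\lambda_\infty(\nu):=\{\epsilon:(\nu,\epsilon)\in\MAS^\lambda\}$. It then uses convexity plus $0\in\mathcal{E}^\lambda_\infty(\nu)$ to conclude $\lambda\,\mathcal{E}^\lambda_\infty(\nu)\subseteq\mathcal{E}^\lambda_\infty(\nu)$. However, the one nontrivial step, $0\in\mathcal{E}^\lambda_\infty(\nu)$, is left unproved, and neither of your plans for it works as stated.

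The averaging plan fails on two counts. First, Assumption~\ref{ass:disturbance} does not make $\mathcal{W}$ symmetric, so $-w$ need not be admissible. Second, even if it were, averaging the outputs from $\epsilon$ under $w$ and $-w$ cancels the \emph{disturbance} contribution. It returns $C(A_K/\lambda)^k\epsilon+(CS_K+D)\nu$, which says nothing about the zero-start response $\sum_{j=0}^{k-1}\lambda^{-(k-j)}A_K^{k-1-j}B_ww_j$ (the exponent of $\lambda$ is $k-j$, not $j+1$). To isolate that response you would have to cancel the initial-condition part instead, e.g.\ by averaging starts at $\epsilon$ and $-\epsilon$, and $-\epsilon$ is not known to lie in the slice.

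The fallback, as phrased, aims at the wrong property. Constraint satisfaction along the \emph{true} error trajectory gives membership in an unscaled MAS. Feasibility of \eqref{eq:RG_opt} at the next step instead requires $(v_{t-1},A_Ke_{t-1}+B_ww_{t-1})\in\MAS^\lambda$, which is defined through the scaled dynamics \eqref{eq:contractive_error_dynamics}.

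The paper closes this step by applying the tightening to the \emph{scaled} zero-start trajectory. For $\nu\in\overbar{\mathcal{S}}_v$ and $\epsilon_0=0$ one has $\epsilon_k\in\bigoplus_{i=0}^{k-1}\lambda^{-(i+1)}A_K^iB_w\mathcal{W}\subseteq\RPI_\infty\subseteq\RPI$. Hence $(CS_K+D)\nu+C\epsilon_k+D_ww_k\in(\overbar{\mathcal{Y}}\ominus D_w\mathcal{W}\ominus C\RPI)\oplus C\RPI\oplus D_w\mathcal{W}\subseteq\overbar{\mathcal{Y}}\subseteq\mathcal{Y}$. Using this requires adding $v_t\in\overbar{\mathcal{S}}_v$ to your invariant. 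The paper proves that by a separate induction: $v_0=r_0\in\overbar{\mathcal{S}}_v$ by Assumption~\ref{ass:init}, $r_t\in\overbar{\mathcal{S}}_v$ by \eqref{eq:OCO_OGD}, and $v_t$ is a convex combination of $v_{t-1}$ and $r_t$ in the convex set $\overbar{\mathcal{S}}_v$.

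Alternatively, your stronger claim that a nonempty slice contains $0$ is true, via a shifting argument. Given $w_0,\dots,w_k\in\mathcal{W}$, prepend $m$ zero disturbances, which is admissible since $0\in\mathcal{W}$. The scaled trajectory from $\epsilon$ then has output at step $m+k$ equal to the zero-start output at step $k$ plus $C(A_K/\lambda)^{m+k}\epsilon$. This extra term vanishes as $m\to\infty$ because $\rho(A_K)<\lambda$, so closedness of $\mathcal{Y}$ yields $(\nu,0)\in\MAS^\lambda$ without using $\nu\in\overbar{\mathcal{S}}_v$ at all.

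With either fix inserted, the rest of your proof goes through, including your closed-interval argument for existence of the maximizer in \eqref{eq:RG_opt}.
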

\begin{proof}
	In this proof, we extend the arguments from the proofs of \cite[Lemmas~6 and~8]{Nonhoff22b} to account for the disturbances in \eqref{eq:original_dynamics}. We prove recursive feasibility by showing that $\alpha_t=0$ is a feasible solution to \eqref{eq:RG_opt} for all $t\in\mathbb{N}$. To this end, we need three auxiliary results. Let $\mathcal{E}^\lambda_\infty(v) := \{e\in\mathbb{R}^n: (v,e)\in\MAS^\lambda\}$. 
	
	First, we show that $\mathcal{E}^\lambda_\infty(v)$ is $\lambda$-contractive for~\eqref{eq:error_dynamics}, i.e., $e_t\in\mathcal{E}^\lambda_\infty(v)$ implies $A_Ke_t+B_ww_t\in\lambda\mathcal{E}^\lambda_\infty(v)$ for any $w_t\in\mathcal{W}$. Fix any $t\in\mathbb{N}$, $v\in\overbar{\mathcal{S}}_v$, and $w_t\in\mathcal{W}$. Using the fact that the MAS $\MAS^\lambda$ is computed for the system \eqref{eq:error_system} with \eqref{eq:error_dynamics} replaced by \eqref{eq:contractive_error_dynamics}, and that the MAS is robust positively invariant by definition, we have
	\begin{align}
		e_t \in\mathcal{E}^\lambda_\infty(v) 
		&\refrarrow{\eqref{eq:contractive_error_dynamics}} \left( v,\frac{1}{\lambda}(A_Ke_t+B_ww_t)\right) \in \MAS^\lambda \nonumber \\
		&\reflrarrow {\eqref{eq:error_dynamics}} e_{t+1} = A_Ke_t+B_ww_t\in\lambda\mathcal{E}^\lambda_\infty(v). \label{eq:lambda_contractive_set}
	\end{align}
	
	Second, we show that $0\in\mathcal{E}^\lambda_\infty(v)$ for any $v\in\overbar{\mathcal{S}}_v$. To see this, note that for $\epsilon_0=0$ and any $t\in\mathbb{N}$, $0\in\RPI$ implies \mbox{$	\epsilon_t \refin{\eqref{eq:contractive_error_dynamics}} \bigoplus_{i=0}^{t-1} \left( \frac{1}{\lambda^{i+1}}A_K^i B_w \mathcal W\right) \refsubseteq{\eqref{eq:mRPI}} \RPI_\infty \subseteq \RPI$.} Hence, for $\epsilon_0=0$, any $t\in\mathbb{N}$, and any $v\in\overbar{\mathcal{S}}_v$, we get $(CS_K+D)v+C\epsilon_t + D_ww_t\in\overbar{\mathcal{Y}}\ominus D_w\mathcal{W} \ominus C\RPI \oplus C\RPI \oplus D_w\mathcal{W} \subseteq \overbar{\mathcal{Y}}$, which implies the desired result $0\in\mathcal{E}^\lambda_\infty(v)$. Combining this with \eqref{eq:lambda_contractive_set} and using convexity of $\mathcal{E}^\lambda_\infty(v)$ (which follows from convexity of $\MAS^\lambda$), we get
	\begin{equation}
		e_t \in \mathcal{E}^\lambda_\infty(v) \refrarrow{\eqref{eq:lambda_contractive_set}} e_{t+1}\in\lambda\mathcal{E}^\lambda_\infty(v) \subseteq \mathcal{E}^\lambda_\infty(v). \label{eq:pos_inv_set}
	\end{equation}
	
	Third, note that $r_t\in\overbar{\mathcal{S}}_v$ for all $t\in\mathbb{N}$ by definition in \eqref{eq:OCO_OGD} and Assumption~\ref{ass:init}. We show that also $v_t\in\overbar{\mathcal{S}}_v$ for all $t\in\mathbb{N}$ by induction. At $t=0$, $v_0\in\overbar{\mathcal{S}}_v$ holds by Assumption~\ref{ass:init}. Assume that $v_{t-1}\in\overbar{\mathcal{S}}_v$. Due to convexity of $\overbar{\mathcal{S}}_v$, $r_t\in\overbar{\mathcal{S}}_v$, and $v_{t}$ being a convex combination of $r_t$ and $v_{t-1}$ by \eqref{eq:RG_update}, we have $v_t\in\overbar{\mathcal{S}}_v$, which concludes the proof by induction.
	
	Finally, we show recursive feasibility of Algorithm~1, i.e., we prove that there exists a feasible solution to \eqref{eq:RG_opt} for all $t\in\mathbb{N}$. Again, we show the result by induction. For $t=0$, $(v_0,e_0)\in\MAS^\lambda$ by Assumption~\ref{ass:init}. Fix any $t\in\mathbb{N}_{\geq1}$ and assume that $(v_{t-1},e_{t-1})\in\MAS^\lambda$. Consider the candidate solution $\alpha^c_t=0$. It remains to show that $v^c_t=v_{t-1}+\alpha^c_t(r_t-v_{t-1})=v_{t-1}$ satisfies $(v^c_t,e_{t})\in\MAS^\lambda$. Since $(v_{t-1},e_{t-1})\in\MAS^\lambda \Leftrightarrow e_{t-1}\in\mathcal{E}^\lambda_\infty(v^c_t)$, implies $(v^c_t,e_{t})\in\MAS^\lambda$ by \eqref{eq:pos_inv_set}, Algorithm 1 is recursively feasible. Finally, note that $y_t\in\mathcal{Y}$ by definition of the MAS $\MAS^\lambda$.
\end{proof}

Having established recursive feasibility and robust constraint satisfaction for all times $t\in\mathbb{N}$, we analyze the closed-loop performance of Algorithm~1. As in previous works, e.g., \cite{Nonhoff24,Nonhoff22b,Li2019}, we use dynamic regret as a measure of performance. We define dynamic regret as the accumulated difference between the closed-loop cost and the optimal steady-state cost
\[
	\regret:=\sum_{t=0}^T L_t(u_t,x_t) - L_t(\eta_t,\theta_t).
\]
In order to obtain an upper bound on the dynamic regret, we first prove that the modified reference $v_t$ is moved closer to $r_t$ at all times $t\in\mathbb{N}$.
\begin{lem}\label{lem:convergence}
	Suppose Assumptions~\ref{ass:stabilizable}--\ref{ass:init} are satisfied. There exists $\delta\in(0,1]$ such that $\alpha_t\geq\delta$ for all $t\in\mathbb{N}$.
\end{lem}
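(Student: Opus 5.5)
The plan is to show that a fixed fraction $\delta$ of the step from $v_{t-1}$ towards $r_t$ is always admissible in \eqref{eq:RG_opt}. The argument combines the $\lambda$-contractivity from the proof of Lemma~\ref{lem:rec_feas} with convexity of $\MAS^\lambda$ and a uniform margin coming from $\overbar{\mathcal{Y}}\subseteq\text{int }\mathcal{Y}$. Fix $t\geq1$ and write $\Delta_t := r_t - v_{t-1}$. By Lemma~\ref{lem:rec_feas} (and Assumption~\ref{ass:init} at $t=1$) we have $(v_{t-1},x_{t-1}-S_Kv_{t-1})\in\MAS^\lambda$. Then \eqref{eq:lambda_contractive_set} gives $\hat e_t := x_t - S_Kv_{t-1}\in\lambda\mathcal{E}^\lambda_\infty(v_{t-1})$, that is, $(v_{t-1},\hat e_t/\lambda)\in\MAS^\lambda$. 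The candidate $v=v_{t-1}+\alpha\Delta_t$ has error $x_t-S_Kv = \hat e_t - \alpha S_K\Delta_t$, so the constraint in \eqref{eq:RG_opt} reads $(v_{t-1}+\alpha\Delta_t,\ \hat e_t-\alpha S_K\Delta_t)\in\MAS^\lambda$.

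Next I decompose this pair as a convex combination:
\[
(v_{t-1}+\alpha\Delta_t,\ \hat e_t-\alpha S_K\Delta_t) = \lambda\,(v_{t-1},\hat e_t/\lambda) + (1-\lambda)\,z,
\]
\[
z := \Big(v_{t-1}+\tfrac{\alpha}{1-\lambda}\Delta_t,\ -\tfrac{\alpha}{1-\lambda}S_K\Delta_t\Big).
\]
The first point lies in $\MAS^\lambda$, and $\MAS^\lambda$ is convex. It therefore suffices to show that $z\in\MAS^\lambda$ for all $\alpha\le\delta$ with some $\delta$ independent of $t$. If $\alpha\le 1-\lambda$, then $z_v$ is a convex combination of $v_{t-1},r_t\in\overbar{\mathcal{S}}_v$ (shown in the proof of Lemma~\ref{lem:rec_feas}), so $z_v\in\overbar{\mathcal{S}}_v$. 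The remaining task is to show that $\overbar{\mathcal{S}}_v\times\{e:\norm{e}\le c\}\subseteq\MAS^\lambda$ for some $c>0$.

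To get this margin, take $v\in\overbar{\mathcal{S}}_v$ and $\epsilon_0=e$. The trajectory of \eqref{eq:contractive_error_dynamics} splits as $\epsilon_t=\lambda^{-t}A_K^te+p_t$ with $p_t\in\RPI$, exactly as in the second step of the proof of Lemma~\ref{lem:rec_feas}. Since $\rho(A_K)<\lambda$, there is $M$ with $\norm{\lambda^{-t}A_K^t}\le M$ for all $t$. Hence $y_t\in\overbar{\mathcal{Y}}\oplus\{y:\norm{y}\le\norm{C}Mc\}$. Because $\overbar{\mathcal{Y}}$ is compact and contained in $\text{int }\mathcal{Y}$, it has positive distance $d$ to the boundary of $\mathcal{Y}$. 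Choosing $c=d/(\norm{C}M)$ (any $c$ if $C=0$) then yields $y_t\in\mathcal{Y}$ for all $t$.

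The main obstacle is bounding $\norm{z_e}=\frac{\alpha}{1-\lambda}\norm{S_K\Delta_t}$ uniformly in $t$, which requires $\overbar{\mathcal{S}}_v$ to be bounded, say with diameter $\bar d$. I would argue this from compactness of $\mathcal{Y}$ together with Assumption~\ref{ass:cost_fcn}: strong convexity and gradient Lipschitz continuity on $\mathcal{S}_v$ are only meaningful with uniform constants if $\mathcal{S}_v$ is bounded, for instance when $CS_K+D$ has full column rank. If this fails, I would state boundedness of $\overbar{\mathcal{S}}_v$ as the implicit requirement. With $\bar d$ in hand, the final step is
\[
\delta:=\min\Big\{1,\ 1-\lambda,\ \tfrac{(1-\lambda)c}{\norm{S_K}\bar d}\Big\}\in(0,1].
\]
Every $\alpha\in[0,\delta]$ is then feasible in \eqref{eq:RG_opt}, so $\alpha_t\ge\delta$ for all $t\ge1$, and $\alpha_0=1\ge\delta$ holds by definition.
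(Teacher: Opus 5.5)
Your proof is correct and follows the paper's route. The paper derives exactly your inclusion $(v_{t-1},\frac{1}{\lambda}(x_t-S_Kv_{t-1}))\in\MAS^\lambda$ from $\lambda$-contractivity, noting that the result no longer involves $w_{t-1}$. It then defers the remaining steps to the nominal proof of \cite[Lemma~9]{Nonhoff22b}, and your convex-combination argument with the uniform margin from $\overbar{\mathcal{Y}}\subseteq\text{int }\mathcal{Y}$ spells those steps out.

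The one step to repair is your justification that $\overbar{\mathcal{S}}_v$ is bounded. Strong convexity with a Lipschitz gradient does not force boundedness: take $\frac{1}{2}\norm{v}^2$ on $\mathbb{R}^m$. Boundedness does follow from Assumption~\ref{ass:cost_fcn}, however. Lipschitz continuity of $L_t$ makes $L^s_t$ Lipschitz on $\mathcal{S}_v$ with the constant $c_v$ from \eqref{eq:regret_proof_1}. Comparing this with midpoint strong convexity gives $\frac{\alpha_v}{8}\norm{v-v'}^2\leq\frac{1}{2}(L^s_t(v)+L^s_t(v'))-L^s_t(\frac{v+v'}{2})\leq\frac{c_v}{2}\norm{v-v'}$ for all $v,v'\in\mathcal{S}_v\supseteq\overbar{\mathcal{S}}_v$. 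Hence the diameter is at most $4c_v/\alpha_v$, and no extra assumption is needed. In fact, since $\mathcal{S}_v\neq\emptyset$, this even forces $CS_K+D$ to have full column rank.
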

\begin{proof}
	This proof follows the same steps as the proof of \cite[Lemma 9]{Nonhoff22b}. Fix any $t\in\mathbb{N}_{\geq1}$. Assumption~\ref{ass:init} and Lemma~\ref{lem:rec_feas} imply $(v_{t-1},x_{t-1}-S_Kv_{t-1})\in\MAS^\lambda$ and, thus, 
	\begin{align}
		(v_{t-1},&e_{t-1})\in\MAS^\lambda \refrarrow{\eqref{eq:lambda_contractive_set}} A_Ke_{t-1}+B_ww_{t-1}\in\lambda\mathcal{E}^\lambda_\infty(v_{t-1}) \nonumber \\
		&\reflrarrow{\eqref{eq:dynamics}} \left( v_{t-1}, \frac{1}{\lambda}(x_t-S_Kv_{t-1}) \right) \in \MAS^\lambda. \label{eq:disturbances_cancel_out}
	\end{align}
	This is the same result as \cite[Eq. (13)]{Nonhoff22b} (in particular, note that the left hand side of \eqref{eq:disturbances_cancel_out} does not explicitly depend on $w_{t-1}$) and the remainder of the proof follows exactly the same steps.
\end{proof}

Using Lemma~\ref{lem:convergence}, we can now establish an upper bound on the dynamic regret of Algorithm 1.
\begin{thm} \label{thm}
	Suppose Assumptions~\ref{ass:stabilizable}--\ref{ass:init} are satisfied and $\gamma\in(0,\frac{2}{\alpha_v+l_v}]$. There exist $c_0,c_\eta,c_w>0$ such that
	\begin{equation}
		\regret \leq c_0+c_\eta\sum_{t=1}^T \norm{\eta_t-\eta_{t-1}} + c_w\sum_{t=0}^{T-1} \norm{w_t} \label{eq:regret_bound}
	\end{equation}
	holds for any $T\in\mathbb{N}$, any sequence of disturbances $w_t$, and any sequence of cost functions $L_t$.
\end{thm}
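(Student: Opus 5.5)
The plan is to bound the regret via Lipschitz continuity of $L_t$ on $\mathcal Z$ (Assumption~\ref{ass:cost_fcn}). By Lemma~\ref{lem:rec_feas}, the closed-loop pair $(u_t,x_t)$ stays in $\mathcal Z$, or in the set obtained after accounting for $D_w w_t$; compactness of $\mathcal Y$ and $\mathcal W$ then gives a uniform bound if needed. Writing $u_t=v_t+Kx_t$, $x_t=S_Kv_t+e_t$ and $(\eta_t,\theta_t)$ with $\theta_t=S_K\eta_t$, I would use
\[
L_t(u_t,x_t)-L_t(\eta_t,\theta_t)\le l_L\bigl(\norm{v_t-\eta_t}(1+\norm{K}\norm{S_K}+\norm{S_K}) + (1+\norm{K})\norm{e_t}\bigr).
\]
Strictly, the comparison point should be the steady-state input $\eta_t+K\theta_t$; I will read the regret with that convention. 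The regret is therefore bounded by a constant times $\sum_t(\norm{v_t-\eta_t}+\norm{e_t})$, and I would bound these two sums separately.

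\textbf{Step 1 (OGD tracking).} With $\gamma\le 2/(\alpha_v+l_v)$, the projected gradient step is a contraction toward the minimizer: $\norm{r_t-\eta_{t-1}}\le\kappa\norm{r_{t-1}-\eta_{t-1}}$ for some $\kappa<1$ (standard, \cite{Nesterov18}). Hence $\norm{r_t-\eta_t}\le\kappa\norm{r_{t-1}-\eta_{t-1}}+\norm{\eta_t-\eta_{t-1}}$, and summing gives $\sum_t\norm{r_t-\eta_t}\le c(\norm{r_0-\eta_0}+\sum_t\norm{\eta_t-\eta_{t-1}})$.

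\textbf{Step 2 (RG lag).} From \eqref{eq:RG_update} and Lemma~\ref{lem:convergence}, $\norm{v_t-r_t}=(1-\alpha_t)\norm{v_{t-1}-r_t}\le(1-\delta)\bigl(\norm{v_{t-1}-r_{t-1}}+\norm{r_t-r_{t-1}}\bigr)$. Moreover, $\norm{r_t-r_{t-1}}\le\norm{r_t-\eta_{t-1}}+\norm{r_{t-1}-\eta_{t-1}}\le(1+\kappa)\norm{r_{t-1}-\eta_{t-1}}$. Summing this geometric recursion bounds $\sum_t\norm{v_t-r_t}$ by a constant times $\sum_t\norm{r_t-\eta_t}$ plus an initial term. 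The triangle inequality then controls $\sum_t\norm{v_t-\eta_t}$.

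\textbf{Step 3 (error state).} Using $x_{t+1}=A_Kx_t+Bv_t+B_ww_t$ and $S_Kv_t=A_KS_Kv_t+Bv_t$, we get $e_{t+1}=A_Ke_t+B_ww_t-S_K(v_{t+1}-v_t)$. Since $A_K$ is Schur, $\norm{A_K^k}\le c\rho^k$, and summing the convolution yields
\[
\sum_t\norm{e_t}\le c\Bigl(\norm{e_0}+\sum_t\norm{w_t}+\sum_t\norm{v_{t+1}-v_t}\Bigr).
\]
Here $\norm{v_{t+1}-v_t}\le\norm{v_t-r_{t+1}}\le\norm{v_t-r_t}+\norm{r_{t+1}-r_t}$, which is already controlled by Steps 1--2. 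This step is exactly where the $c_w\sum\norm{w_t}$ term enters.

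Combining the three steps, with the initial terms bounded by the diameters of the compact sets, gives \eqref{eq:regret_bound}. I expect the main obstacle to be the bookkeeping of index shifts, since $r_t$ tracks $\eta_{t-1}$ rather than $\eta_t$. Another point is making sure Lipschitz continuity is applied only on points where it is valid, namely $(\eta_t,\theta_t)$ and closed-loop points inside $\mathcal Z$. If the disturbance term $D_w w_t$ pushes points outside $\mathcal Z$, I would use boundedness from compactness to absorb this into constants.
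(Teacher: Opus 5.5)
Your proposal is correct and follows essentially the same route as the paper: Lipschitz reduction to $\sum\norm{v_t-\eta_t}$ and $\sum\norm{x_t-S_Kv_t}$, the projected-gradient contraction, a geometric recursion driven by $\alpha_t\geq\delta$, and the Schur-stable convolution for $e_t$ that produces the $\sum\norm{w_t}$ term. The only differences are minor. You route the recursion through $\norm{v_t-r_t}$ and $\norm{r_t-\eta_t}$, whereas the paper recurses directly on $v_t-\eta_t$, which is purely bookkeeping. Your hedge about points leaving $\mathcal Z$ is unnecessary, and its fallback would in fact add a term growing linearly in $T$: since $0\in\mathcal W$ and $\MAS^\lambda$ enforces the output constraint for every $\omega\in\mathcal W$, you get $Cx_t+Dv_t\in\mathcal Y$, so the closed-loop point already lies in $\mathcal Z$.
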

\begin{proof}
	This proof follows the same steps as the proof of \cite[Theorem 10]{Nonhoff22b}, suitably adapted to account for the disturbances. First, Assumption~\ref{ass:cost_fcn} yields
	\begin{align}
		&\regret = \sum_{t=0}^T L_t(v_t+Kx_t,x_t) - L_t(\eta_t+KS_K\eta_t,S_K\eta_t) \nonumber \\
		\leq &l_L \sum_{t=0}^T \norm{\begin{bmatrix} v_t+Kx_t \\ x_t \end{bmatrix} - \begin{bmatrix} \eta_t+KS_K\eta_t \\ S_K\eta_t\end{bmatrix} } \nonumber \\
		\leq &l_L \sum_{t=0}^T \norm{ \begin{bmatrix} Kx_t - KS_Kv_t \\ x_t - S_K v_t \end{bmatrix} } + \norm{\begin{bmatrix} (I_m+KS_K) (v_t-\eta_t) \\ S_Kv_t-S_K\eta_t \end{bmatrix} } \nonumber \\
		\leq &c_x \sum_{t=0}^T \norm{x_t-S_Kv_t} + c_v \sum_{t=0}^T \norm{v_t-\eta_t}, \label{eq:regret_proof_1}
	\end{align}
	where $c_v:=l_L\norm{\begin{bmatrix} (I_m+KS_K)^\top & S_K^\top \end{bmatrix}^\top}$ and $c_x:=l_L \norm{\begin{bmatrix} K^\top & I_n \end{bmatrix}^\top}$. To bound the two sums in \eqref{eq:regret_proof_1}, we use the following standard result from convex optimization.
	\begin{lem} \cite[Theorem 2.2.14]{Nesterov18}
		Suppose Assumptions \ref{ass:constraints} and \ref{ass:cost_fcn} are satisfied and let $\gamma\in(0,\frac{2}{\alpha_v+l_v}]$. Then,
		\begin{equation}
			\norm{\Pi_{\overbar{\mathcal{S}}_v} (r-\gamma\nabla L^s_{t-1}(r))-\eta_{t-1}} \leq \kappa\norm{r-\eta_{t-1}}, \label{eq:GD_convergence}
		\end{equation}
		where $\kappa = 1-\gamma\alpha_v\in[0,1)$, holds for any $r\in\overbar{\mathcal{S}}_v$, cost function $L^s_{t-1}$, and any $t\in\mathbb{N}_{\geq1}$.
	\end{lem}
	First, we bound the second sum in \eqref{eq:regret_proof_1}. Note that
	\begin{align*}
		&\sum_{t=1}^T \norm{r_t-\eta_{t-1}} \refleq{\eqref{eq:GD_convergence}} \kappa\sum_{t=1}^T \norm{r_{t-1}-\eta_{t-1}} \\
		&\leq \kappa\norm{r_0-\eta_0} + \kappa \sum_{t=1}^T \norm{r_t-\eta_{t-1}} + \kappa \sum_{t=1}^{T} \norm{\eta_t-\eta_{t-1}}
	\end{align*}
	implies
	\begin{equation}
		\sum_{t=1}^T \norm{r_t{-}\eta_{t-1}} \leq \frac{\kappa}{1-\kappa} \norm{r_0{-}\eta_0} + \frac{\kappa}{1-\kappa} \sum_{t=1}^{T} \norm{\eta_t{-}\eta_{t-1}}. \label{eq:regret_proof_aux_1}
	\end{equation}
	Together with \eqref{eq:RG_update} and $\alpha_t\in[\delta,1]$ in Lemma~\ref{lem:convergence}, this implies
	\begin{align*}		
		\sum_{t=0}^T \norm{v_t-\eta_t} &\refleq{\eqref{eq:RG_update}} \norm{v_0-\eta_0} + \sum_{t=1}^T (1-\alpha_t) \norm{v_{t-1}-\eta_{t-1}} \\
		&\quad+ \sum_{t=1}^T \alpha_t \norm{r_t-\eta_{t-1}} + \sum_{t=1}^T \norm{\eta_t-\eta_{t-1}} \\
		&\refleq{\eqref{eq:regret_proof_aux_1}} k_0\norm{v_0-\eta_0} + (1-\delta) \sum_{t=1}^{T} \norm{v_t-\eta_t} \\
		&\quad+ \frac{1}{1-\kappa} \sum_{t=1}^T \norm{\eta_t-\eta_{t-1}},
	\end{align*}
	where $k_0:= \frac{1+(1-\kappa)(1-\delta)}{1-\kappa}$. From here, rearranging yields the desired bound
	\begin{equation}
		\sum_{t=0}^T \norm{v_t-\eta_t} \leq \frac{k_0}{\delta}\norm{v_0-\eta_0} + \frac{1}{\delta(1-\kappa)} \sum_{t=0}^T \norm{\eta_t-\eta_{t-1}}. \label{eq:regret_proof_sum_2}
	\end{equation}
	Note that \eqref{eq:regret_proof_sum_2} together with \eqref{eq:regret_proof_aux_1} and $\alpha_t\leq1$ implies
	\begin{align}
		&\sum_{t=1}^T \norm{v_t-v_{t-1}} \refeq{\eqref{eq:RG_update}} \sum_{t=1}^T \alpha_t \norm{r_t-v_{t-1}} \nonumber \\
		&\leq \norm{v_0-\eta_0} + \sum_{t=1}^T \norm{r_t-\eta_{t-1}} + \sum_{t=1}^T \norm{v_t-\eta_t} \nonumber \\
		&\refleq{\eqref{eq:regret_proof_aux_1},\eqref{eq:regret_proof_sum_2}} c_{0,v} \norm{v_0-\eta_0} + \frac{1+\delta\kappa}{\delta(1-\kappa)} \sum_{t=1}^T \norm{\eta_t-\eta_{t-1}}, \label{eq:regret_proof_aux_2}
	\end{align}
	where $c_{0,v} = \frac{k_0(1-\kappa)+\delta}{\delta(1-\kappa)}$. It remains to bound the first sum in \eqref{eq:regret_proof_1}. To do that, consider the error dynamics of $e_t=x_t-S_Kv_t$ given by $e_{t+1}=A_Ke_t + S_K(v_t-v_{t+1}) + B_ww_t$. Repeatedly using this equation yields
	\begin{equation}
		e_t = A_K^t e_0 + \sum_{i=0}^{t-1} A_K^i S_K (v_{t-i-1}-v_{t-i}) + \sum_{i=0}^{t-1} A_K^i B_w w_{t-i-1}. \label{eq:closed_loop_error_dynamics}
	\end{equation}
	Furthermore, since $A_K$ is Schur stable, there exists $c_A\geq1$ and $\sigma\in(0,1)$ such that $\norm{A_K^t}\leq c_A\sigma^t$ holds for any $t\in\mathbb{N}$. Let $c_\sigma:=c_A(1-\sigma)^{-1}$. Then, we get
	\begin{align}
		&\sum_{t=0}^T \norm{x_t-S_Kv_t} \refleq{\eqref{eq:closed_loop_error_dynamics}} \norm{e_0}+ \sum_{t=1}^T \norm{A_K^t}\norm{e_0} \nonumber \\
		&\qquad+ \norm{S_K}\sum_{t=1}^T \sum_{i=0}^{t-1} \norm{A_K^i} \norm{v_{t-i-1}-v_{t-i}}\nonumber \\
		&\qquad+\norm{B_w} \sum_{t=1}^T \sum_{i=0}^{t-1} \norm{A_K^i} \norm{w_{t-i-1}}\nonumber \\
		&\leq c_\sigma \norm{e_0}+\norm{S_K} \sum_{t=1}^{T} \left( \norm{v_t-v_{t-1}} \sum_{i=0}^{T-1} \norm{A_K^i} \right)\nonumber \\
		&\qquad + \norm{B_w} \sum_{t=0}^{T-1} \left( \norm{w_t} \sum_{i=0}^{T-1} \norm{A_K^i} \right)\nonumber \\
		&\leq c_\sigma \norm{e_0}+c_\sigma\norm{S_K} \sum_{t=1}^T \norm{v_t{-}v_{t-1}} + c_\sigma \norm{B_w} \sum_{t=0}^{T-1} \norm{w_t}. \label{eq:regret_proof_sum_1}
	\end{align}
	From here, plugging \eqref{eq:regret_proof_aux_2} into \eqref{eq:regret_proof_sum_1} and the result together with \eqref{eq:regret_proof_sum_2} into \eqref{eq:regret_proof_1} yields the desired bound.
\end{proof}

Theorem \ref{thm} establishes an upper bound \eqref{eq:regret_bound} on the dynamic regret of Algorithm~1 that depends linearly on the quantity $\sum_{t=1}^T \norm{\eta_t-\eta_{t-1}}$, which can be seen as a measure of the variation of the cost functions and is commonly termed path length in the literature. Compared to \cite[Theorem 10]{Nonhoff22b}, we obtain an additional term that depends linearly on the magnitude of the disturbances $\sum_{t=0}^{T-1} \norm{w_t}$. This has to be expected, because the disturbance may drive the closed loop away from the optimal steady state. Note that a similar bound depending on the magnitude of the disturbances has recently been established for a different method \cite{Nonhoff24} and we refer the reader to \cite[Section~IV]{Nonhoff24} for a more detailed discussion.

\section{NUMERICAL EXAMPLE} \label{sec:simulation}

\setlength{\breite}{.4\textwidth}
\setlength{\hohe}{5cm}
\begin{figure}
	\small
	\input{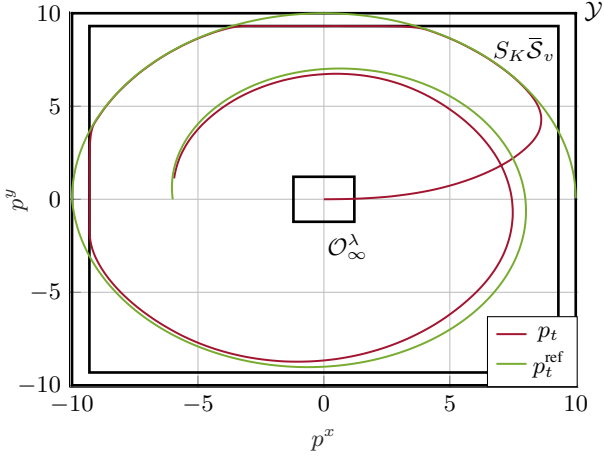}
	\caption{Closed-loop position $p_t$ (red) and reference position $p_t^{\text{ref}}$ (green) together with the projections of the sets $\mathcal{Y}$, $S_K\overbar{\mathcal{S}}_v$, and $\MAS^\lambda$ (centered at $0$) onto the $(p^x,p^y)$-plane (black)}
	\label{fig:position}
\end{figure}

In this section, we implement the proposed robust OCO-RG scheme on a robot moving on a two-dimensional plane\footnote{The code for this simulation can be found online at \url{https://doi.org/10.25835/x2q5hrrg}}. The dynamics of the robot are given by
\[
	\underbrace{\begin{bmatrix} p_{t+1} \\ \nu_{t+1} \end{bmatrix}}_{x_{t+1}} = \underbrace{\begin{bmatrix}I_2&\tau I_2\\0_{2\times2}&I_2\end{bmatrix}}_{A} \underbrace{\begin{bmatrix} p_t \\ \nu_t \end{bmatrix}}_{x_t} + \underbrace{\begin{bmatrix}0_{2\times2}\\\tau I_2\end{bmatrix}}_Bu_t,
\]
where $\tau=0.1$, $p_t= (p^x_t,p^y_t)$ denotes the robot's location, $\nu_t=(\nu^x_t,\nu^y_t)$ its velocity, and $u_t\in\mathbb{R}^2$ is the control input, and $x_0 = 0_{4\times1}$. Furthermore, only noisy measurements of the robot's state are available $\tilde x_t := x_t + \mu_t$, where $\mu_t\in\mathcal{M} := \{\mu\in\mathbb{R}^n: \norm{\mu}_\infty\leq0.01\}$ denotes the measurement error, caused by, e.g., measurement noise or use of an observer. While the results derived in this paper assumed perfect state measurements, the following reformulation shows that also this setting including measurement noise can be handled by our proposed algorithm: First, we design the stabilizing feedback $K$ as the linear-quadratic regulator (LQR) with weighting matrices $Q=100I_n$ and $R=1$. Then, the noisy closed-loop dynamics are given by
\[
	\tilde x_{t+1} = A_K\tilde x_t+Bv_t+\mu_{t+1}-A_K\mu_t.
\]
Finally, the constraints are $\norm{p_t}_\infty\leq10$, $\norm{\nu_t}_\infty\leq1$, and $\norm{u_t}_\infty\leq2$. Therefore, we set
\[
	w_t = \begin{bmatrix} \mu_{t+1}-A_K\mu_t\\ \mu_t\end{bmatrix},~\mathcal{W} = \begin{bmatrix} I_n \\ 0_{n\times n}\end{bmatrix} \mathcal{M} \oplus \begin{bmatrix} -A_K \\ I_n \end{bmatrix} \mathcal{M},
\]
$B_w = \begin{bmatrix} I_n & 0_{n\times n}\end{bmatrix}$, and for the constraints
\[
	C_0=\begin{bmatrix} I_n \\ 0_{m\times n} \end{bmatrix},~D=\begin{bmatrix} 0_{n\times m} \\ I_m \end{bmatrix},~D_w =\begin{bmatrix} 0_{n\times n} & -I_n \\ 0_{m\times n} & 0_{m\times n} \end{bmatrix},	
\]
and $\mathcal{Y} = \{y\in\mathbb{R}^6: \norm{y_{1:2}}_\infty\leq10,~\norm{y_{3:4}}_\infty\leq1,\norm{y_{5:6}}_\infty\leq2\}$, where $y_{i:j}$ denotes the vector consisting of the $i$-th to $j$-th entry of the vector $y$. Since the system and constraints are now in the form \eqref{eq:system}, \eqref{eq:constraints}, we proceed to design the OGD algorithm and the reference governor. For the reference governor, we set $\lambda = 0.99$ and compute the MAS $\MAS^\lambda$ using the mpt3 toolbox \cite{MPT3} and the method described in \cite{Kalabic14,Gilbert91}. The optimization problem \eqref{eq:RG_opt} is solved by bisection. For the OGD algorithm, we set $\gamma = 17.35$ (which is such that the condition in Theorem~\ref{thm} holds), $\overbar{\mathcal{Y}}=0.95\mathcal{Y}$, and compute the RPI set with the method from \cite{Rakovic2005}. The initial reference is $r_0=0_{2\times1}$. We aim to follow an a priori unknown and time-varying reference for the position $p_t$, i.e., $L_t(u,x) = L_t(p) = \frac{1}{2}\norm{p-p^{\text{ref}}_t}^2$. The reference position $p^{\text{ref}}_t$ is depicted in Figure~\ref{fig:position} together with the closed-loop position $p_t$ resulting from application of Algorithm~1, and the projection of the sets $\mathcal Y$, $S_K\overbar{\mathcal{S}}_v$, and $\MAS^\lambda$ onto the $(p^x,p^y)$-plane. For the first $600$ time steps, the reference position $p^{\text{ref}}_t$ moves along a circle with radius $10$ from $p^{\text{ref}}_0=(10,0)$ to $p^{\text{ref}}_{600}=(-10,0)$. Afterwards, the reference accelerates and the radius decreases both linearly in time. It can be seen that initially, after a transient phase, the closed loop tracks the reference position closely while the reference is inside the projection of $S_K\overbar{\mathcal{S}}_v$ onto the $(p^x,p^y)$-plane and moves to the border of the tightened constraints otherwise. When the reference accelerates, the closed loop is not able to keep up anymore due to the constraints. This can also be seen in Figure~\ref{fig:alpha}, where the reference governor intervenes when the reference accelerates, i.e., after $t=600$, $\alpha_t<1$. However, we observe that $\alpha_t\geq0.005$ for all $t$ as predicted in Lemma~\ref{lem:convergence}. Moreover, Figure~\ref{fig:velocity} shows the closed-loop velocity $\nu^x_t$ together with the constraints. The reference governor successfully satisfies the constraints despite the presence of measurement noise, i.e., $\max_t \norm{\nu^x_t} = 0.9712<1$.

\setlength{\hohe}{4cm}
\begin{figure}
	\small
	\input{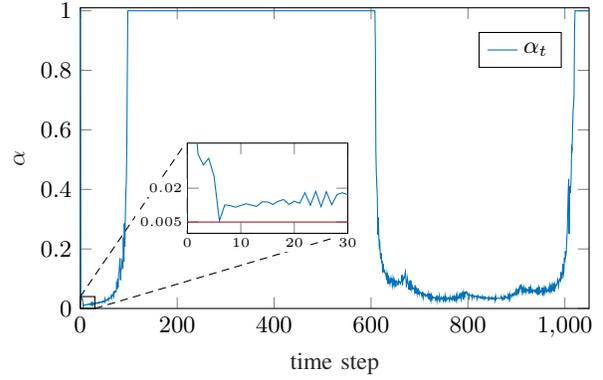}
	\caption{Reference governor parameter $\alpha_t$ (blue; compare \eqref{eq:RG_opt})}
	\label{fig:alpha}
\end{figure}

\begin{figure}
	\small
	\input{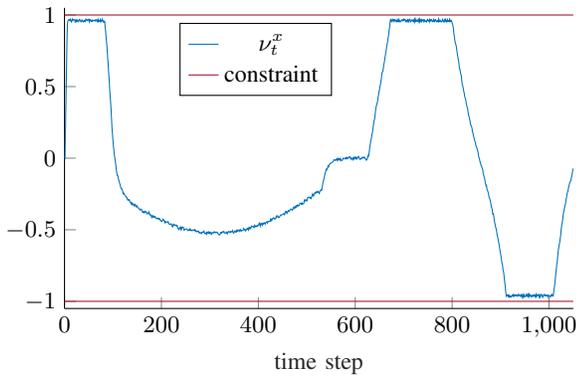}
	\caption{Closed-loop velocity $\nu^x_t$ (in direction of $p^x$; blue) together with the corresponding velocity constraints (red)}
	\label{fig:velocity}
\end{figure}

\section{CONCLUSION} \label{sec:conclusion}

In this paper, we propose a conceptually simple approach for controlling linear dynamical systems that allows to consider (i) time-varying and a priori unknown cost functions, (ii) state and input constraints, and (iii) exogenous disturbances, and solve each of these problems separately by suitably combining (i) the OCO framework, (ii) a reference governor, and (iii) a constraint tightening approach. We show that the proposed algorithm is recursively feasible and ensures robust constraint satisfaction. Furthermore, the algorithm's dynamic regret is bounded linearly in the variation of the cost functions and the magnitude of the disturbances.

Future work includes considering different combinations of OCO algorithms and reference governors than the OGD method \eqref{eq:OCO_OGD} and the scalar reference governor \eqref{eq:RG}. Furthermore, generalizing our approach to more general system classes, e.g., nonlinear systems or time-varying linear systems, is an interesting direction for future research.


\bibliographystyle{ieeetr}
\bibliography{bib}

\end{document}